\newtheorem{theorem}{Theorem}
\newtheorem{lemma}[theorem]{Lemma}
\newtheorem{remark}[theorem]{Remark}
\newcommand{\Xomit}[1]{ }
\newenvironment{proof}[1][Proof]{\textbf{#1.} }{\ \rule{0.5em}{0.5em}}
\mathchardef\mhyphen="2D
\newcommand{\eps}{\upvarepsilon}
\newcommand{\gam}{\upgamma}
\begin{document}

\title{A new lower bound for classic online bin packing}


\date{}

\author{J\'anos Balogh \thanks{Department of Applied Informatics, Gyula Juh\'asz Faculty of Education,
     University of Szeged, Hungary. \texttt{balogh@jgypk.u-szeged.hu}} \and J\'ozsef B\'ek\'esi \thanks{Department of Applied Informatics, Gyula Juh\'asz Faculty of Education,
     University of Szeged, Hungary. \texttt{bekesi@jgypk.u-szeged.hu}} \and Gy\"{o}rgy
D\'{o}sa\thanks{Department of Mathematics, University of Pannonia,
Veszpr\'em, Hungary, \texttt{dosagy@almos.vein.hu}.} \and Leah
Epstein\thanks{ Department of Mathematics, University of Haifa,
Haifa, Israel. \texttt{lea@math.haifa.ac.il}. } \and Asaf
Levin\thanks{Faculty of Industrial Engineering and Management, The
Technion, Haifa, Israel. \texttt{levinas@ie.technion.ac.il.}}}


\maketitle

\begin{abstract}
We improve the lower bound on the asymptotic competitive ratio of
any online algorithm for bin packing to above $1.54278$. We
demonstrate for the first time the advantage of branching and the
applicability of full adaptivity in the design of lower bounds for
the classic online bin packing problem. We apply a new method for
weight based analysis, which is usually applied only in proofs of
upper bounds. The values of previous lower bounds were
approximately $1.5401$ and $1.5403$.
\end{abstract}

\section{Introduction}
The bin packing problem \cite{U71,J73,Gon32} is a well-studied
combinatorial optimization problem with origins in data storage
and cutting stock. The input consists of items of rational sizes
in $(0,1]$, where the goal is to split or pack them into
partitions called bins, such that the total size of items for every bin cannot
exceed $1$. The online bin packing problem \cite{CsiWoe98,Gon32}
is its variant where items are presented one by one, and the
algorithm assigns each item to a bin before it can see the next
item.

For an algorithm $A$ and an input $I$, let $A(I)$ be the cost
(number of bins) used by $A$ for $I$. The algorithm $A$ can be an
online or offline algorithm, and it can also be an optimal offline
algorithm OPT. The absolute competitive ratio of algorithm $A$ for
input $I$ is the ratio between $A(I)$ and $OPT(I)$. The absolute
competitive ratio of $A$ is the worst-case (or supremum) absolute
competitive ratio over all inputs. Given an integer $N$, we can
consider the worst-case absolute competitive ratio over inputs
where $OPT(I)$ is not smaller than $N$. Taking this sequence and
letting $N$ grow to infinity, the limit is the asymptotic
competitive ratio of $A$. This measure is the standard one for
analysis of the bin packing problem, and it is considered to be
more meaningful than the absolute ratio (which is affected by very
small inputs).

The current best online algorithm with respect to the asymptotic
competitive ratio has an asymptotic competitive ratio no larger
than $1.57829$ \cite{DBLP:journals/corr/BaloghBDEL17}, which was
found recently by development of new methods of analysis. Previous
results were achieved via a sequence of improvements
\cite{J74,JDUGG74,Yao80A,LeeLee85,RaBrLL89,Seiden02J,HvS16}. In
this work, we consider the other standard aspect of the online
problem, namely, of establishing lower bounds on the asymptotic
competitive ratio that can be achieved by online algorithms.

The first lower bound on the asymptotic competitive ratio was
found by Yao \cite{Yao80A}, and it uses an input with at most
three types of items: $\frac 17+\eps$, $\frac 13+\eps$, and $\frac
12+\eps$ (where $\eps>0$ is sufficiently small). For this input,
if the entire input is presented, every bin of an optimal solution
has one item of each type (and otherwise there are larger numbers
of items in a bin, but all bins are still packed identically). It
is possible to start the sequence with smaller items, for example,
it can be started with $\frac 1{1807}+\eps$ and then $\frac
1{43}+\eps$, which increases the result. This was discovered by
Brown and Liang (independently)
\cite{liang1980lower,brown1979lower}, who showed a lower bound of
1.53635.  Van Vliet \cite{Vliet92} found an interesting method of
analysis and showed that the same approach (the same sequence with
additional items) gives in fact a lower bound of 1.5401474.
Finally, Balogh, B{\'e}k{\'e}si, and Galambos \cite{balogh2012new}
showed that the greedy sequence above is actually not the best one
among sequences with batches of identical items, and proved a
lower bound of $248/161 \approx 1.5403726$ (see also \cite{BDE}
for an alternative proof). Their sequence starts with decreasing
powers of $\frac 17$ plus epsilon (it can be started with items
complementing the other items to $1$ but it does not change the
bound), and after $\frac{1}{49}+\eps$ the other items are exactly
those used by Yao \cite{Yao80A}.  This result of
\cite{balogh2012new} is the previously best known lower bound.

One drawback of the previous lower bounds is that while the exact
input was not determined in advance, the set of sizes used for it
was determined prior to the action of the algorithm by the input
provider and it was known to the algorithm. Moreover, for classic
bin packing, in all previously designed lower bound inputs, sizes
of items were slightly larger than a reciprocal of an integer, and
optimal solutions consisted of bins with identical packing
patterns. The possible item sizes and numbers of items were known
to the algorithm, but the stopping point of the input was unknown,
and it was based on the action of the algorithm. It seemed
unlikely that such examples are indeed the worst-case examples. We
show here that different methods for proving lower bounds and new
approaches to sizes of items give an improved lower bound.

\paragraph{New features of our work.}
Previous lower bound constructions for standard bin packing were
defined for inputs without branching. Those are inputs where the
possible inputs differ only by their stopping points. Here, we use
an input with branching, which makes the analysis harder, as those
branches are related (the additional items may use the same
existing bins in addition to new bins), but at most one of them
will be presented eventually. It is notable that branching was
used to design an improved lower bound for the case where the
input consists of three batches \cite{BBDGT15} (where for each one
of the batches, all items are presented at once), but it was
unknown whether it can be used to design improved lower bounds for
standard online bin packing. That is, it was unknown if the impact
of branching in \cite{BBDGT15} is similar to one additional batch
or if it gives the adversary more power that can be used in the
general settings as well.

It was also not known whether one can exploit methods of
constructing fully adaptive inputs, where in some parts of the
input every item size is based precisely on previous decisions of
the algorithm. Such results were previously proved for online bin
packing with cardinality constraints, where (in addition to the
constraint on the total size) every bin is limited to containing
$k$ items, for a fixed parameter $k \geq 2$
\cite{Blitz,BCKK04,FK13,BBDEL_ESA}. Thus, in addition to branching
we will use the following theorem proved in \cite{BBDEL_ESA} (see
the construction in Section 3.1 and Corollary 3  in
\cite{BBDEL_ESA}).

\begin{theorem}\label{cnstrct}
Let $N \geq 1$ be a large positive integer and let $k \geq 2$ be
an integer. Assume that we are given an arbitrary deterministic
online algorithm for a variant of bin packing and a binary
condition $Con_1$ on the possible behavior of an online algorithm
for one item (on the way that the item is packed). An adversary is
able to construct a sequence of values $a_i$ ($1 \leq i \leq N$)
such that for any $i$, $a_i \in \left( k^{- 2^{N+3}}, k^{-2^{N+2}}
\right)$, and in particular $a_i \in \left(0,\frac{1}{k^4}\right)$
(defining item sizes is done using a given linear function of the
values $a_i$), such that for any item $i_1$ satisfying $Con_1$ and
any item $i_2$ not satisfying $Con_1$, it holds that
$\frac{a_{i_2}}{a_{i_1}} > k$.
\end{theorem}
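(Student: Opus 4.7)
The natural approach is an adaptive bisection of the admissible range. After $i$ items have been presented and the algorithm has responded to each, I maintain a subinterval $[L_i, R_i] \subseteq (k^{-2^{N+3}}, k^{-2^{N+2}})$ from which future values will be drawn, together with the bookkeeping quantities $M_i := \max\{a_j : j \leq i,\ \text{item } j \text{ satisfies } Con_1\}$ and $m_i := \min\{a_j : j \leq i,\ \text{item } j \text{ does not satisfy } Con_1\}$ (with $M_i = 0$ and $m_i = +\infty$ when the respective set is empty). The loop invariant I will propagate is
\[
k \cdot M_i \;\leq\; L_i \;\leq\; R_i \;\leq\; m_i/k,
\]
which immediately guarantees that any future $S$-item is at most $m_i/k$ and any future $\bar S$-item is at least $k M_i$. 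Once this is carried through all $N$ rounds, a case split on whether $i_1 < i_2$ or $i_2 < i_1$ delivers the required separation.

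For the inductive step, the admissible range for $a_{i+1}$ that leaves the invariant restorable under \emph{both} possible algorithm responses is $[k L_i, R_i/k]$, which is non-empty precisely when $r_i := R_i/L_i \geq k^2$. I will choose $a_{i+1} := \sqrt{L_i R_i}$. A short calculation shows that under either response the updated ratio becomes $r_{i+1} = \sqrt{r_i}/k$. Setting $s_i := \log_k r_i$ turns this into the linear recurrence $s_{i+1} = s_i/2 - 1$ with closed-form solution $s_i = -2 + (s_0 + 2)/2^i$. Starting from $s_0 = 2^{N+2}$ yields $s_i > 2$ for every $i \leq N-1$, so the bisection window remains non-degenerate throughout the construction, and every $a_i$ stays strictly inside $(k^{-2^{N+3}}, k^{-2^{N+2}})$.

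The main obstacle is less the bisection arithmetic, which is essentially forced, than the careful verification that both branches of the inductive step propagate the invariant, and the upgrade from weak separation to the strict inequality $a_{i_2}/a_{i_1} > k$. Strictness follows because each $a_j = \sqrt{L_{j-1} R_{j-1}}$ lies \emph{strictly} inside $[L_{j-1}, R_{j-1}]$ (as $r_{j-1} > 1$): for $i_1 \in S$ with $i_1 < i_2$, the update $L_{i_1} = k \cdot a_{i_1}$ combined with the monotonicity $L_j \geq L_{i_1}$ for $j \geq i_1$ yields $a_{i_2} > L_{i_2-1} \geq L_{i_1} = k a_{i_1}$, and the case $i_2 < i_1$ is symmetric using the non-increasing sequence $R_j$. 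Finally, the hypothesis that item sizes are a linear function of the $a_i$ does not interfere with the adversary's strategy, since the condition $Con_1$ depends only on the online algorithm's reaction to each presented item, not on the specific linear map used to compute its size from $a_i$.
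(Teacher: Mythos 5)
The paper does not prove Theorem~\ref{cnstrct} itself; it imports it verbatim from the earlier reference \cite{BBDEL_ESA} (pointing to their Section~3.1 and Corollary~3), so there is no in-paper argument to compare against. Your blind proof is a correct, self-contained reconstruction. The adaptive bisection you set up is exactly what the interval bounds $(k^{-2^{N+3}},\, k^{-2^{N+2}})$ are engineered for: taking $a_{i+1}=\sqrt{L_iR_i}$ and shrinking one endpoint by a factor of $k$ after each response gives the recurrence $s_{i+1}=s_i/2-1$ for $s_i=\log_k(R_i/L_i)$, and starting from $s_0=2^{N+2}$ keeps $s_i>2$ for all $i\le N-1$, which is precisely what is needed to present all $N$ items without the window collapsing. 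I checked the invariant propagation in both branches, the strict-inequality upgrade (the geometric mean lies strictly inside a nondegenerate interval, and the monotonicity of $L$ and $R$ carries the factor $k$ across rounds), and the containment $a_i\in(L_0,R_0)\subset(0,k^{-4})$ for $N\ge 1$; all are sound. Your closing remark that the linear map from $a_i$ to item size is immaterial to the adversary's bookkeeping is correct and worth keeping, since the theorem statement invites that confusion. This matches in spirit the construction the paper cites, and I see no gaps.
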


Examples for the condition $Con_1$ can be the following: ``the
item is packed as a first item of its bin'', ``the item is packed
into a non-empty bin'', ``the item is packed into a bin already
containing an item of size above $\frac 12$'', etc. Here, the
condition $Con_1$ will be that the item is not packed into an
empty bin (or a new bin).

Our method of analysis is based on a new type of a weighting
function. This kind of analysis is often used for analyzing bin
packing algorithms, that is, for upper bounds. It was used for
lower bounds \cite{BDE} and by van Vliet \cite{Vliet92} (where the
term weight is not used, and the values given to items are based
on the dual linear program, but the specific kind of dual
variables and their usage can be adapted to a weighting function).
However, those weights were defined for inputs without branching
and we extend the use of these weights for inputs with branching
for the first time, which adds technical challenges to our work
also in the analysis. The advantage of weights is that we do not
need to test all packing patterns of an algorithm, whose number
can be very large, and thus we obtain a complete and verifiable
proof with much smaller number of cases than that of pattern based
proofs (see for example \cite{FK13}).

\section{The input}

Let $t \geq 3$ be an integer, let $\eps >0 $ be small constant,
let $M$ be a large integer and let $N = M \cdot 42^{t}$ ($N$ is a
large integer divisible by $6\cdot 7^t$). The condition on $\eps$
is: $\eps < \frac{1}{(2058)^{t}}$.

Given a specific algorithm ALG, we will analyze it for the set of
inputs defined here, where the input depends on the actions of ALG
both with respect to stopping the input, but also some of the
sizes will be based on the exact action of ALG, and on the
previously presented items and their number.

Let $C_t = \frac{1}{6\cdot 7^{t-1}}-294\eps$, and for $2 \leq j
\leq t-1$, let $C_j=\frac{1+28\eps}{7^j}$. The input starts with
batches of $N$ items of the sizes $C_j$, for every
$j=t,t-1,\ldots,2$, where the input may be stopped after each one
of these batches. An item of size $C_j$ is called a $C_j$--item.

Afterwards, there are $N$ items called $A$--items. The sizes of
$A$--items will be all strictly larger than $\frac{1+\eps}7$ but
strictly smaller than $\frac{1+2\eps}7$. Any $A$--item packed as a
first item into a bin will be called a large $A$--item, and any
other $A$--items will be called a small $A$--item. During the
construction, based on the actions of the algorithm, we will
ensure that for any large $A$--item, its difference from
$\frac{1+\eps}7$ is larger by a factor of more than $4$ than the
difference from $\frac{1+\eps}7$ of any small $A$--item. The
details of attaining this property are given below (see Lemma
\ref{lem2}).

Let $\gam>0$ be such that the size of every small $A$--item is at
most $\frac{1+\eps +\gam}7$ while the size of every large
$A$--item is above $\frac{1+\eps +4\gam}7$ (where $\gam <
\frac{\eps}4$). The input may be stopped after $A$--items are
introduced (the number of $A$ items is $N$ no matter how many of
them are small and how many are large). Let $n_L$ denote the
number of large $A$--items, and therefore there are $N-n_L$ small
$A$--items. Even though the $A$--items will have different sizes
and they cannot be presented at once to the algorithm, we see them
as one batch.

If the input is not stopped after the arrival of $A$--items, there
are three options to continue the input (i.e., we use branching at
this point). In order to define the three options, we first define
the following five items types. A $B_{11}$--item has size $\frac
{1+2\eps}2$. A $B_{21}$--item has size $\frac {1+\eps}3$ and a
$B_{22}$--item has size $\frac {1+\eps}2$. A $B_{31}$--item has
size $\frac{5-2\eps-3\gam}{14}$ and a $B_{32}$--item has size
$\frac{7+\gam}{14}=\frac {1}2+\frac{\gam}{14}< \frac 12 +
\frac{\eps}{56}$ (this size is above $\frac 12$).

The first option to continue is with $B_{11}$--items, such that a
batch of $\frac N3$ such items arrive. The second option is with a
batch of $B_{21}$--items, possibly followed by a batch of
$B_{22}$--items. In this option, the number of items of each batch
is $N$. The third option is that a batch of $B_{31}$--items
arrive, possibly followed by a batch of $B_{32}$--items. In the
last case, we define the numbers of items based on $n_L$ as
follows. The number of $B_{31}$--items (if they are presented) is
$n_{31}=\frac{7N - 7n_L}6$. The number of $B_{32}$--items (if they
are presented) is $n_{32}=\frac{7N -5n_L}6$.  This concludes the
description of the input (see Figure \ref{thetree} for an
illustration).

\begin{figure} [h!]
\vspace{0.65cm} \hspace{0.9in}
\includegraphics[angle=270,width=0.75\textwidth]{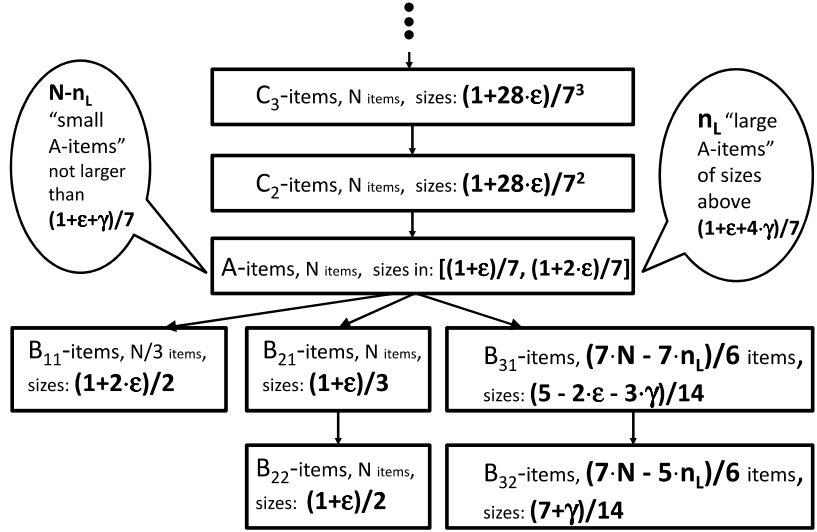}
\vspace{-0.87cm} \caption{An illustration of the input. Every box
contains a set of items, and the input may be stopped after
presenting the items of any box. In cases with branching, at most
one path is selected, and any such path may be presented as an
input. \label{thetree}}
\end{figure}

We conclude this section by showing that indeed we can construct
the batch (or subsequence) of $A$--items satisfying the required
properties.

\begin{lemma}\label{lem2}
The sizes of $A$--items can be constructed as described.
\end{lemma}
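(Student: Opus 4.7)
I would apply Theorem \ref{cnstrct} directly as a black box, taking $k = 4$ and the binary condition $Con_1$ to be ``the item is not packed into an empty bin''. Under the terminology just introduced, an A-item satisfies $Con_1$ if and only if it is a small A-item, and it fails $Con_1$ if and only if it is a large A-item. The theorem then produces an adaptively chosen sequence $a_1,\ldots,a_N \in (4^{-2^{N+3}},4^{-2^{N+2}}) \subset (0,1/256)$ with the key separation property $a_{i_2} > 4\, a_{i_1}$ whenever $i_1$ is small and $i_2$ is large.

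To convert these values into A-item sizes, I would use the linear function $s_i = \frac{1+\eps}{7} + \frac{a_i}{7}$, which is admissible since Theorem \ref{cnstrct} permits any linear function in the $a_i$'s. Because $N = M \cdot 42^t$ can be made arbitrarily large while $\eps$ is a fixed positive constant, the upper bound $a_i < 4^{-2^{N+2}}$ forces $a_i$ to be much smaller than $\eps$, so that $s_i \in \left(\frac{1+\eps}{7},\frac{1+2\eps}{7}\right)$ as required.

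It remains to pin down $\gamma$. In the generic case where both small and large A-items exist, let $\alpha = \max\{a_i : i \text{ is small}\}$ and $\beta = \min\{a_j : j \text{ is large}\}$. The separation property gives $\beta > 4\alpha$, and choosing $N$ large enough also ensures $\alpha < \eps/16$, so the interval $[\alpha,\min(\beta/4,\eps/4))$ is nonempty. Any $\gamma$ picked from it is positive, satisfies $\gamma < \eps/4$, and by the definitions of $\alpha$ and $\beta$ every small A-item has size at most $\frac{1+\eps+\gamma}{7}$ while every large A-item has size strictly above $\frac{1+\eps+4\gamma}{7}$. The degenerate cases, when only one of the two classes is nonempty, are handled trivially: any sufficiently small positive $\gamma$ works.

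The only real obstacle is a conceptual one: one has to recognize that Theorem \ref{cnstrct} applies cleanly in the middle of a longer input, after the $C_j$-batches have already been packed and possibly left some bins partially loaded. This is fine because the theorem's guarantee depends only on how each A-item is packed at the moment of its arrival, and the chosen condition $Con_1$ refers precisely to that single-item decision. Once this identification is made, the lemma is a direct instantiation of Theorem \ref{cnstrct} with the specified linear function.
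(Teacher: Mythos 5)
Your proof is correct and takes essentially the same route as the paper: apply Theorem~\ref{cnstrct} as a black box with $Con_1$ being ``the item is placed in a non-empty bin'' and item sizes given by the linear function $\frac{1+\eps+a_i}{7}$, then read off $\gam$ from the resulting separation between small and large $A$--items. The only cosmetic differences are your choice $k=4$ versus the paper's $k=\lceil 1/\eps\rceil$ (the latter makes $a_i<\eps$ immediate from $a_i<1/k^4\leq \eps^4$, without appealing to the largeness of $N$), and your explicit treatment of the degenerate case where only one of the two classes of $A$--items occurs, which the paper leaves implicit.
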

\begin{proof}
We use Theorem \ref{cnstrct}. Condition $Con_1$ is that the item
is packed into a bin that already contains at least one item (this
item may be of a previous batch of items). Let $k=\lceil \frac
1{\eps} \rceil$. The items sizes are $\frac {1+\eps+a_i}{7}$. We
find that all item sizes are in $(\frac {1+\eps}{7},\frac
{1+2\eps}{7})$. We also have for two items of sizes $\frac
{1+\eps+a_{i_1}}{7}$ and $\frac{1+\eps+a_{i_2}}{7}$, where the
second item does not satisfy $Con_1$ while the first one satisfies
$Con_1$ that $\frac{a_{i_2}}{a_{i_1}}>k$. Let $\gam$ be the
maximum size of any value $a_i$ of an item satisfying $Con_1$.
Then, we have $\frac {1+\eps+ a_{i_1}}{7} \leq \frac
{1+\eps+\gam}{7}$ and $\frac {1+\eps+ a_{i_2}}{7} \geq \frac
{1+\eps+k \gam}{7}
> \frac {1+\eps+4 \gam}{7}$, as required.
\end{proof}

In order to give some motivation regarding the sizes of items,
note that by $\eps < \frac{1}{(2058)^{t}}$, we have
$\frac{5-2\eps-3\gam}{14} \geq \frac 5{14}-\frac{2.75}{14}\cdot
\eps > 0.35714$, while $\frac{1+\eps}3 < 0.33334$.

\section{Bounds on the optimal costs}

We find upper bounds on optimal costs. We denote the optimal cost
after the batch of items of sizes $C_j$ is presented by $OPT_j$
(for $j \geq 2$). Similarly, we denote an optimal cost after the
batch of $A$--items by $OPT_1$.

\begin{lemma}
For $t\geq j \geq 2$, we have $OPT_j \leq \frac{N}{6 \cdot
7^{j-1}}$, and $OPT_1 \leq \frac{N}{6}$. Furthermore, let $j\geq
1$, then the total size of one item of each batch up to the batch
of $C_j$--items (if $j\geq 2$) or up to the batch of $A$--items
(if $j=1$) is at most $\frac{1}{6\cdot 7^{j-1}} - 293\eps$.
\end{lemma}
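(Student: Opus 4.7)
The plan is to first establish the additive total-size bound and then use it to produce explicit packings realising the claimed optimal bin counts.

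For $2\leq j\leq t$, I would evaluate the geometric series $\sum_{i=j}^{t-1}C_i = (1+28\eps)\sum_{i=j}^{t-1}7^{-i}$ exactly and add $C_t$, obtaining
\[
\sum_{i=j}^t C_i = \frac{1}{6\cdot 7^{j-1}} - 294\eps + 28\eps\Bigl(\frac{1}{6\cdot 7^{j-1}} - \frac{1}{6\cdot 7^{t-1}}\Bigr).
\]
Since $j\geq 2$ gives $\frac{1}{6\cdot 7^{j-1}}\leq \frac{1}{42}<\frac{1}{28}$, the parenthesised correction is strictly below $\frac{1}{28}$, so the $+28\eps\cdot(\cdots)$ term is strictly below $\eps$; hence the whole sum is at most $\frac{1}{6\cdot 7^{j-1}}-293\eps$. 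For $j=1$ I would add the maximum possible $A$-item size $\frac{1+2\eps}{7}$ to the $j=2$ estimate and verify numerically that the residual $\eps$-terms leave at most $\frac{1}{6}-293\eps$; the extra $\frac{2\eps}{7}$ contributed by the $A$-item is safely absorbed by the $-\eps$ slack baked into the definition of $C_t$.

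The cost bounds then follow from explicit packings. For $j\geq 2$ I would use $\frac{N}{6\cdot 7^{j-1}}$ bins (an integer because $N$ is divisible by $6\cdot 7^t$), each containing exactly $6\cdot 7^{j-1}$ items of every size $C_j,\ldots,C_t$; this exhausts each of the $(t-j+1)$ batches. The load of such a bin equals $6\cdot 7^{j-1}\cdot \sum_{i=j}^t C_i$, which by the bound above is at most $1-1758\cdot 7^{j-1}\eps<1$, so the packing is feasible. For $OPT_1\leq N/6$, I would use $N/6$ bins, each carrying $6$ items of every $C_i$ ($i=2,\ldots,t$) together with $6$ of the $A$-items; the load of such a bin is at most $6$ times the sum of one representative of each batch, which by the $j=1$ case of the additive bound is at most $6\bigl(\frac{1}{6}-293\eps\bigr)=1-1758\eps<1$.

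The only obstacle is arithmetic bookkeeping: one must verify that the $-294\eps$ slack built into $C_t$ strictly dominates the positive $+28\eps$ correction accumulated from $C_2,\ldots,C_{t-1}$ for every $j\geq 2$, and likewise that the positive $A$-item contribution is dominated by the same slack in the $j=1$ case. No branching considerations enter because the statement concerns only optima up to the end of the $A$-batch, before any of the three $B$-branches is chosen.
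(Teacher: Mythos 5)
Your proof is correct and takes essentially the same approach as the paper: you bound the total size of one representative from each batch by $\frac{1}{6\cdot 7^{j-1}}-293\eps$ by evaluating the geometric sum exactly and checking that the $28\eps$-scaled correction plus (for $j=1$) the extra $\frac{2\eps}{7}$ from an $A$-item is absorbed by the $-294\eps$ slack built into $C_t$, and then pack $6\cdot 7^{j-1}$ items of each of the relevant sizes per bin (resp.\ $6$ of each in the $j=1$ case), using divisibility of $N$ by $6\cdot 7^t$ to make the counts work out. The paper's proof is the same construction with the same arithmetic.
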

\begin{proof}
First, consider $j\geq 2$. The total size of $t-j+1$ items,
each of a different size out of $C_t,C_{t-1},\ldots,C_j$ is
$C_t+\sum_{i=j}^{t-1} C_j=\frac{1}{6\cdot
7^{t-1}}-294\eps+\sum_{i=j}^{t-1}\frac{1+28\eps}{7^j}=\frac{1}{6\cdot
7^{t-1}}-294\eps+\frac{1+28\eps}{7^j}\sum_{i=j}^{t-1}\frac{1}{7^{i-j}}
=\frac{1}{6\cdot
7^{t-1}}-294\eps+\frac{1+28\eps}{7^j}\cdot\frac{7-1/7^{t-j-1}}6=\frac{1}{6\cdot
7^{t-1}}-294\eps+(1+28\eps)\frac{1/7^{j-1}-1/7^{t-1}}6 <
\frac{1}{6\cdot 7^{j-1}} - 293\eps$, as $\sum_{i=j}^{t-1}
\frac{1}{7^{i-j}}=\sum_{i=0}^{t-j-1}\frac{1}{7^{i}}=\frac{1-1/7^{t-j}}{6/7}=
\frac{7-1/7^{t-j-1}}6$. Thus, it is possible to pack $6\cdot
7^{j-1}$ items of each size into every bin and get a feasible solution with
$\frac{N}{6\cdot 7^{j-1}}$ bins, so $OPT_j \leq \frac{N}{6\cdot
7^{j-1}}$.

A similar bound can be used for the input up to the batch of
$A$--items as well. In this case the total size of one item of
each size $C_j$ together with one $A$--item (small or large) is at
most $\frac{1}{6\cdot
7^{t-1}}-294\eps+(1+28\eps)\frac{1/7-1/7^{t-1}}6+\frac {1+2\eps}7
 <\frac 16 - 293\eps$. Thus, $OPT_1 \leq \frac{N}{6}$ (by packing six items of each batch into every bin).
\end{proof}

We let $OPT_{11}$, $OPT_{21}$, $OPT_{22}$, $OPT_{31}$, $OPT_{32}$,
denote costs of optimal solutions for the inputs after the batches
of $B_{11}$--items, $B_{21}$--items, $B_{22}$--items,
$B_{31}$--items, and $B_{32}$--items were presented, respectively.
In the next lemma we present upper bounds on these optimal costs.

\begin{lemma}
We have $OPT_{11} \leq \frac{N}3$, $OPT_{21} \leq \frac{N}2$,  $OPT_{22} \leq {N}$, $OPT_{31}
\leq \frac{7N-5n_L}{12}$, and $OPT_{32} \leq
\frac{7N  -5n_L}6$.
\end{lemma}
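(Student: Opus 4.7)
The plan is to exhibit explicit packings for each of the five branches and verify two things: that the item counts match, and that every bin has total size at most $1$. I will repeatedly use the consequence of the previous lemma that one item of each $C_j$-size together with one $A$-item has total size at most $\frac{1}{6}-293\eps$, and hence $\sum_{j=2}^{t} C_j \le \frac{1}{42}-(293+\frac{1}{7})\eps$.

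For the three simpler bounds a single uniform pattern suffices. For $OPT_{11}$, use $N/3$ bins each with one $B_{11}$-item, three $A$-items, and three items of each $C_j$-size; the size is $\le \frac{1+2\eps}{2}+3(\frac{1}{6}-293\eps)<1$. For $OPT_{21}$, use $N/2$ bins each with two $B_{21}$-items, two $A$-items, and two of each $C_j$; size $\le \frac{2+2\eps}{3}+2(\frac{1}{6}-293\eps)<1$. For $OPT_{22}$, use $N$ bins each with one $B_{22}$-item, one $B_{21}$-item, one $A$-item, and one of each $C_j$; size $\le \frac{1+\eps}{2}+\frac{1+\eps}{3}+\frac{1}{6}-293\eps<1$.

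For $OPT_{31}$ I would partition items by $A$-type and use three patterns: (i) $n_L/6$ bins of $6$ large $A$-items plus $6$ of each $C_j$; (ii) $5(N-n_L)/12$ bins of $2$ small $A$-items and $2$ $B_{31}$-items; (iii) $(N-n_L)/6$ bins of $1$ small $A$-item, $2$ $B_{31}$-items, and $6$ of each $C_j$. Arithmetic gives total $\frac{2n_L+7(N-n_L)}{12}=\frac{7N-5n_L}{12}$ bins and all item multiplicities match (for instance the $B_{31}$-count is $2\cdot\frac{5(N-n_L)}{12}+2\cdot\frac{N-n_L}{6}=\frac{7(N-n_L)}{6}$). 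The tightest size check is for pattern (ii): $\frac{2+2\eps+2\gam}{7}+\frac{5-2\eps-3\gam}{7}=\frac{7-\gam}{7}<1$, which would overflow if large $A$-items were used, so the separation of $A$-types is essential.

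For $OPT_{32}$ I would use: (i) $N-n_L$ bins of one $B_{32}$-item, one $B_{31}$-item, and one small $A$-item; (ii) $(N-n_L)/6$ bins of one $B_{32}$-item, one $B_{31}$-item, and $6$ of each $C_j$; (iii) $n_L/3$ bins of one $B_{32}$-item, $3$ large $A$-items, and $3$ of each $C_j$. The total is $(N-n_L)+(N-n_L)/6+n_L/3=\frac{7N-5n_L}{6}$ and all item counts match. The main obstacle, and the most delicate point of the whole construction, is the size check for pattern (i): in the worst case $\frac{7+\gam}{14}+\frac{5-2\eps-3\gam}{14}+\frac{2+2\eps+2\gam}{14}=\frac{14}{14}=1$ exactly, with the $\gam$-terms ($+\gam-3\gam+2\gam$) and the $\eps$-terms ($0-2\eps+2\eps$) cancelling precisely. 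This is no accident: the chosen sizes of $B_{32}$ and $B_{31}$ together with the upper bound $(1+\eps+\gam)/7$ on small $A$-items were calibrated so that this pattern packs tightly. Size checks for (ii) and (iii) are then routine using the bounds on $6\sum_j C_j$ and $3\sum_j C_j$, and for (iii) the inequality $\gam<\eps/4$ absorbs the residual positive $\eps$-contribution. Small divisibility corrections in $n_L$ add at most $O(1)$ bins, which is negligible since $N$ is divisible by $6\cdot 7^t$.
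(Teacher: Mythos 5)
Your proof is correct and follows essentially the same strategy as the paper: exhibit explicit bin patterns for each branch, verify item counts, and check sizes using the bound $\frac{1}{6}-293\eps$ from the preceding lemma, with the tight (exactly $1$, resp.\ $1-\gam/7$) bins in the third branch dictated by the calibrated sizes of $B_{31}$, $B_{32}$ and small $A$--items. The only deviation is a cosmetic reshuffling of the $OPT_{31}$ patterns (you use $\frac{5(N-n_L)}{12}$ bins with two small $A$--items and two $B_{31}$--items plus $\frac{N-n_L}{6}$ bins with one small $A$--item, two $B_{31}$--items and six of each $C_j$, whereas the paper uses $\frac{N-n_L}{2}$ bins with two small $A$--items and two $B_{31}$--items plus $\frac{N-n_L}{12}$ bins with twelve of each $C_j$ and two $B_{31}$--items), which yields the same count $\frac{7N-5n_L}{12}$ and is equally valid.
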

\begin{proof}
We have $OPT_{11} \leq \frac{N}3$, as it is possible to pack three
items of each batch and one $B_{11}$--item into each bin of a
feasible solution, since their total sizes are below $3(\frac
16-293\eps)+\frac {1+2\eps}2<1$.

We have $OPT_{21} \leq \frac{N}2$, as it is possible to pack two
items of each batch and two $B_{21}$--items into each bin of a
feasible solution, since their total sizes are below $2(\frac
16-293\eps)+2\frac {1+\eps}3<1$.

We have $OPT_{22} \leq {N}$, as it is possible to pack one item of
each batch, one $B_{11}$--item and one $B_{21}$--item into each
bin, since their total sizes are below $(\frac 16-293\eps)+\frac
{1+\eps}3+\frac {1+\eps}2<1$.

If the third option for continuing the input is used, we define
solutions as follows. The first solution is for the input up to
$B_{31}$--items. There are bins with six large $A$--items, and six
items of each one of the  preceding batches, there are bins with
two $B_{31}$--items and two small $A$--items, and finally there
are bins with two $B_{31}$--items and $12$ items of each of the
sizes $C_t,C_{t-1},\ldots,C_2$. The numbers of bins of the three
types are $\frac{n_L}6$, $\frac{N-n_L}2$, and $\frac{N-n_L}{12}$,
respectively. We next argue that this is a feasible solution. The
number of $B_{31}$--items that are packed is $7\frac{N-n_L}{6}$,
the number of large $A$--items that are packed is $n_L$, the
number of small $A$--items that are packed is $N-n_L$, and for
every $j$ ($2 \leq j \leq t$), the number of $C_j$--items is
$n_L+(N-n_L)=N$, so all items are packed. The bins are valid as
the total size of items packed into a bin is at most
$1-\frac{\gam}{7} < 1$, as we show now.  For a bin of the first
type, the total size of items  is at most
$6(\frac{1+2\eps}7)+6(\frac{1}{42} - 293\eps)<1-290\eps \leq
1-\frac{\gam}7$. The total size of items packed into a bin of the
second type is at most $2(\frac{1+\eps
+\gam}7)+2(\frac{5-2\eps-3\gam}{14})=1-\frac{\gam}7$. Finally, the
total size of items packed into a bin of the third type is at most
$12(\frac{1}{42} -
293\eps)+2(\frac{5-2\eps-3\gam}{14})<1-290\eps\leq
1-\frac{\gam}7$. Thus, $OPT_{31} \leq \frac{n_L}6 + \frac{N-n_L}2
+ \frac{N-n_L}{12} = \frac{7N-5n_L}{12}$.

The second solution is for the input up to $B_{32}$--items. Every
bin has one $B_{32}$--item. The other contents are as follows.
There are bins with three large $A$--items, and three items of
each one of the preceding batches, there are bins with one
$B_{31}$--item and one small $A$--item, and finally there are bins
with one $B_{31}$--item and $6$ items of each of the sizes
$C_t,C_{t-1},\ldots,C_2$. Those are halves of the contents of the
bins in the case that $B_{32}$--items do not arrive, and the total
sizes of such halves are at most $\frac{1-\frac{\gam}7}2$. The
total size packed into each bin is at most
$(1-\frac{\gam}7)/2+(\frac{7+\gam}{14})=1$. Thus, $OPT_{32} \leq
\frac{7N  -5n_L}6$.
\end{proof}

We next prove that the optimal costs are at least $M$ (for all
possible inputs). We have $C_t = \frac{1}{6\cdot 7^{t-1}}-294\eps
> \frac{1}{6\cdot 7^{t-1}}-294\cdot \frac{1}{2058^t} >
\frac{1}{6\cdot 7^{t-1}+1}$, as $\frac{1}{6\cdot
7^{t-1}}-\frac{1}{6\cdot 7^{t-1}+1}=\frac{1}{6\cdot 7^{t-1}(6\cdot
7^{t-1}+1)}$ while $294\cdot
\frac{1}{2058^t}=\frac{1}{6^{t-1}\cdot 7^{3t-2}}$, and $6\cdot
7^{t-1}(6\cdot 7^{t-1}+1) < 6^{t-1}\cdot 7^{t-1} \cdot 7^{t} <
6^{t-1}\cdot 7^{3t-2}$ by $t \geq 2$. Thus, as all inputs contain
the first batch of $C_t$--items, and every bin has at most $6\cdot
7^{t-1}$ such items, we get that an optimal solution has at least
$\frac{N}{6\cdot 7^{t-1}}>M$ bins.

\section{An analysis using weights}
In this section we provide a complete analytic proof of the
claimed lower bound that we establish using our construction.  In
fact we verified the tightness of our analysis (for this
construction) by solving a mathematical program for some very
small values of $t$.  Our analytic proof is based on assigning
weights to items, defining prices to bins using the weights and
bin types, and finally using these prices to establish the lower
bound.

\paragraph{The assignment of weights to items.}
We assign weights to items as follows. For a $C_j$--item, where $2
\leq j \leq t-1$, we let its weight be $\frac{1}{7^{j-1}}$. The
weight of a $C_t$--item is $\frac{1}{6\cdot 7^{t-2}}$. The weight
of a large $A$--item is denoted by $w$ where $w \in [1,1.5]$. The
weight of any other item is $1$, those are $B_{11}$--items,
$B_{21}$--items, $B_{22}$--items, $B_{31}$--items,
$B_{32}$--items, and small $A$--items.

\paragraph{Definition of bin type.}
For a bin packed by the algorithm, we say that it has type $j$ if
it has a $C_j$--item for some $2 \leq j \leq t$ and no smaller
items (i.e., for any $k$ such that $j<k\leq t$, it has no
$C_k$--item). We say that it has type $1$ if it has an $A$--item
and no smaller items (i.e., it has no $C_k$--item  for all $2 \leq
k \leq t$). We say that it is a double bin if it has a
$B_{21}$--item or a $B_{31}$--item and no smaller items (i.e., no
$C_k$--item for all $2\leq k \leq t$ and no $A$--item), and we say
that it is a single bin if it has only items of sizes above $\frac
12$, i.e., a $B_{11}$--item or a $B_{22}$--item or a
$B_{32}$--item (where every such bin has exactly one item).

\paragraph{The price of a bin type.}
We define the price of a bin type as follows. A bin $D$ of a
certain type may receive additional items after its first batch of
items out of which its first item comes. Moreover, its contents
may differ in different continuations of the input (due to
branching). Consider the contents of $D$ for all continuations
simultaneously (taking into account the situation where these
items indeed arrive), and define a set of items $S(D)$ based on
this (one can think of $S(D)$ as a virtual bin, which is valid for
any possible input). For example, if the bin has one (large)
$A$--item, and in the first continuation it will receive a
$B_{11}$--item, in the second continuation it will receive one
$B_{21}$--item and one $B_{22}$--item, and in the third
continuation it would receive two $B_{31}$--items, then the set
$S(D)$ contains six items (one of size approximately $\frac 17$,
one of size approximately $\frac 13$, two of sizes approximately
$\frac 12$, and two of sizes approximately $\frac 5{14}$). The
price of $D$ is defined as the total weight of items of $S(D)$
(for the example, this price is $w+5$). The price of a bin type is
the supremum price of any bin of this type.

\paragraph{Calculating the prices of the bin types.}
Let $W_j$ denote the
price of bin type $j$, for $1 \leq j \leq t$, let $W_d$ denote the
price of a double bin, and let $W_s$ denote the price of a single
bin.

\begin{lemma}
For the weights defined above, we have $W_s=1$, $W_d=2$, and
$W_1=w+5$.
\end{lemma}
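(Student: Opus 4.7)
The proof naturally decomposes by bin type. The key structural observation is that a bin whose first item is packed during the branching phase exists in only one of the three branches, so its $S(D)$ equals its contents in that single branch; by contrast, a bin whose first item is packed before branching persists in all three branches, and $S(D)$ is the union of its contents over all branches. Single and double bins both fall in the first category (their contents exclude every $C_k$-- and $A$--item, so the bin cannot have been opened before branching); type-$1$ bins fall in the second category (they are created by a large $A$--item, added before branching).

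I would establish $W_s=1$ immediately: a single bin contains only items of size above $\frac{1}{2}$, so at most one such item fits and $|S(D)|=1$. For $W_d=2$, I would split according to whether the first item is a $B_{21}$ or a $B_{31}$, and use the capacity bound of the corresponding branch. In branch $2$, at most two items fit (two $B_{21}$--items, or one $B_{21}$ together with one $B_{22}$; three $B_{21}$--items overflow via $3\cdot\frac{1+\eps}{3}>1$, as does two $B_{21}$--items plus one $B_{22}$ via $\frac{2+2\eps}{3}+\frac{1+\eps}{2}>1$); similarly in branch $3$, via $3\cdot\frac{5-2\eps-3\gam}{14}>1$ and $2\cdot\frac{5-2\eps-3\gam}{14}+\frac{7+\gam}{14}>1$. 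Hence $|S(D)|\leq 2$ in both sub-cases and the bound is tight.

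For $W_1$, let $x\in\{1,\ldots,6\}$ denote the total number of $A$--items in $D$ (one large of weight $w$ and $x-1$ small of weight $1$), and let $b_1,b_2,b_3$ denote the number of items added from the three branches (each of weight $1$). The weight of $S(D)$ equals $w+(x-1)+b_1+b_2+b_3$, so the task is to show $(x-1)+b_1+b_2+b_3\leq 5$ for every feasible $x$ and that $5$ is attained. I would carry out a direct case analysis: for each $x$ and each branch, the feasibility inequality $x\cdot(\text{A-size})+b_i\cdot(B_i\text{-size})\leq 1$ yields an upper bound on $b_i$. The extremal cases are $x=1$ with $(b_1,b_2,b_3)=(1,2,2)$ and $x=6$ with $(b_1,b_2,b_3)=(0,0,0)$ (because the slack $(1-6\eps)/7$ is below every $B_i$--size), both realising the sum $5$; the intermediate values of $x$ each give at most $5$ as well.

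The main obstacle is the fine-grained feasibility check, which requires distinguishing large $A$--items (sizes above $\frac{1+\eps+4\gam}{7}$) from small ones (sizes at most $\frac{1+\eps+\gam}{7}$). For instance, at $x=2$ one must verify that two $B_{31}$--items, or the pair $B_{31}+B_{32}$, do not fit alongside one large and one small $A$--item, via the relation $2\cdot\frac{5-2\eps-3\gam}{14}+\frac{2+2\eps+4\gam}{7}>1$; analogous $\gam$-sensitive margins arise in every borderline sub-case. The inequality $\gam<\eps/4$ supplied by Lemma \ref{lem2} is what makes all the relevant comparisons go the right way.
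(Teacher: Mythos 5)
Your proposal is correct and follows essentially the same route as the paper: $W_s=1$ because a single bin carries exactly one item; $W_d=2$ because a double bin is opened in a single branch and a cardinality check shows $|S(D)|\leq 2$; and $W_1=w+5$ by a case analysis over the number of $A$--items in the bin, pairing each case with the feasibility constraints imposed by the three continuations. The paper parametrizes the type-$1$ analysis by the number of small $A$--items (ranging from $0$ to $5$), which is equivalent to your $x\in\{1,\dots,6\}$ since the bin has exactly one large $A$--item. Your explicit structural observation -- that single/double bins live in only one branch while type-$1$ bins persist across all three -- is implicitly used by the paper and is a helpful clarification. The one borderline case the paper highlights, that one large plus one small $A$--item cannot coexist with two $B_{31}$--items, is exactly the $\gam$-sensitive inequality you single out; the arithmetic matches the paper's $\frac{7+\gam}{7}>1$.
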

\begin{proof}
We have $W_s=1$ and $W_d=2$ as all items of sizes above $\frac 13$
have weights of $1$. This holds as a single bin has exactly one
item, while for a double bin $D$, $|S(D)|\leq 2$ and there is just
one continuation to be considered (so it either has two identical
items, or its second item has size above $\frac 12$, and in both
cases the price is $2$).

Consider now type $1$ bins. For such a bin $D_1$, we consider
$S(D_1)$. This is a bin whose first item is an $A$--item, it has
one large $A$--item and possibly also small $A$--items. The weight
of its large $A$--item is $w$, and we calculate the weight of
other items, and show that it is at most $5$. The number of small
$A$--items of $S(D_1)$ is between zero and five (as the sizes of
$A$--items are above $\frac 17$).

\begin{itemize}

\item  If $S(D_1)$ has at least four small $A$--items, it has no
space for further items (as the total size would be above $\frac
57+\frac 13>1$). In this case, $|S(D_1)| \leq 6$, and its price is
at most $w+5$.

\item If $S(D_1)$ has exactly three small $A$--items, the space
for other items is below $\frac 12$, so $S(D_1)$ can have one
$B_{21}$--item and one $B_{31}$--item, and its price is again
$w+5$.

\item If $S(D_1)$ has exactly two small $A$--items, $S(D_1)$ can
have one item for every continuation, and $|S(D_1)|\leq 6$, so its
price is again $w+5$.

\item If $S(D_1)$ has exactly one small $A$--item,  $S(D_1)$ can
have one $B_{31}$--item (or one $B_{32}$--item), but it cannot
have two $B_{31}$--items, as the size of one small $A$--item, one
large $A$--item, and two $B_{31}$--items is above
$(\frac{1+\eps}7)+(\frac{1+\eps+4\gam}7)+2(\frac{5-2\eps-3\gam}{14})=\frac{1+\eps+1+\eps+4\gam+5-2\eps-3\gam}7=\frac{7+\gam}7>1$.
It can contain a $B_{11}$--item, and it can contain two
$B_{21}$--items. Thus, $|S(D_1)| \leq 6$, so its price is again
$w+5$.

\item Finally, if it has no small $A$--items, it can contain two
$B_{31}$--items or one $B_{11}$--item or two $B_{21}$--items, and
in this case we also have $|S(D_1)| \leq 6$, and a price of $w+5$.
\end{itemize}
\end{proof}

\begin{lemma}
For the weights defined above, we have $W_j=7-\frac{1}{7^{j-1}}$
for $2 \leq j \leq t-1$, and $W_t=7$.
\end{lemma}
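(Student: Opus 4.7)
The plan is to prove both directions of the stated equalities. For the lower bound I will exhibit bins attaining the claimed values: for $2 \leq j \leq t-1$, a bin packed with $7^j - 1$ items of size $C_j$ has $C$-weight exactly $(7^j - 1)/7^{j-1} = 7 - 1/7^{j-1}$, and the hypothesis $\eps < 1/(2058)^t$ ensures the items fit (total size $(1-1/7^j)(1+28\eps) < 1$) while leaving too little leftover space to admit any further item from any of the later batches or branches, so this is also the price of the bin. An analogous construction with $6 \cdot 7^{t-1}$ $C_t$-items realizes $W_t = 7$. The upper bound is the bulk of the work, and I will establish it by a case analysis on $s_C$, the total size of the $C$-items of the bin $D$ under consideration.

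Write $w_C$ for the total $C$-weight of $D$, and $B$ for the $B$-item weight contributed to $S(D)$ by the three branches together; the price of $D$ is then $w_C + B$. In each branch the feasible $B$-item combinations are short: at most one $B_{11}$ in branch~1; at most two $B_{21}$'s or one $B_{21}$ together with one $B_{22}$ in branch~2; at most two $B_{31}$'s or one $B_{31}$ together with one $B_{32}$ in branch~3, so $B$ is a step function of $s_C$. Translating the size constraints $s_C + s_{B,i} \leq 1$ for the three branches into cutoffs and sorting them yields the chain of five thresholds $(4+4\eps+6\gam)/14 < (1-2\eps)/3 < (1-2\eps)/2 < (9+2\eps+3\gam)/14 < (2-\eps)/3$, along which $B$ takes the values $5, 4, 3, 2, 1, 0$ on the six resulting intervals. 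In each interval I aim to prove $w_C + B \leq 7 - 1/7^{j-1}$, or $\leq 7$ when $j = t$.

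The bound on $w_C$ combines a real density inequality with an integer-valued constraint. For $j \leq t-1$, every $C_{j'}$-item in $D$ satisfies $j' \leq j \leq t-1$ and has weight-to-size ratio $7/(1+28\eps)$, giving the real bound $w_C \leq 7 s_C / (1+28\eps)$; evaluating at the upper endpoint of each of the six intervals using $\eps < 1/(2058)^t$ and $\gam < \eps/4$ yields the strict inequality $w_C < 7 - B$ in every case. Since $7^{j-1} w_C = \sum_{j'=2}^{j} n_{j'} \cdot 7^{j-j'}$ is a non-negative integer, the strict inequality improves to $w_C \leq 7 - B - 1/7^{j-1}$, and summing with $B$ closes the argument. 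For $j = t$ the same partition is used, but the integer constraint weakens to $6 \cdot 7^{t-1} w_C \in \mathbb{Z}$ and $C_t$-items have density slightly above $7$, contributing an additional $2058\eps$ per $C_t$-item to the density bound; the choice $\eps < 1/(2058)^t$ is calibrated precisely so that the total excess $2058\eps \, n_t$ (with $n_t \leq 6 \cdot 7^{t-1}$) stays below $1/(6 \cdot 7^{t-1})$, preserving the relevant strict inequality and yielding $w_C + B \leq 7$ on every interval. I expect the main obstacle to be the interval-by-interval verification: in each interval the real bound on $w_C$ sits just below an integer, and all perturbation terms ($\eps$, $\gam$, and for $j = t$ the $C_t$-density excess) must be shown not to bridge that gap before the integer-rounding step is applied.
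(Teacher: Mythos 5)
Your approach is genuinely different from the paper's: the paper reduces every type-$j$ bin to one containing only $C_j$-items by a weight-preserving replacement (replacing each $C_{j'}$-item with $7^{j-j'}$ $C_j$-items and each small $A$-item with $7^{j-1}$ $C_j$-items), and then does a discrete case analysis on the resulting integer count of $C_j$-items against five explicit thresholds. You instead work with a continuous density bound and a separate integrality argument. Your thresholds, the step function $B\in\{5,4,3,2,1,0\}$, and the calibration of the $C_t$ density excess against $\eps < 1/2058^t$ are all correct and essentially mirror the paper's cutoffs.

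However, there is a concrete gap in the decomposition. You write ``the price of $D$ is then $w_C + B$'', but a type-$j$ bin with $j\geq 2$ can also contain small $A$-items (the paper states this explicitly: ``$S(D_j)$ may only have small $A$-items (and items which are not $A$-items)''), and each small $A$-item contributes weight $1$ to the price. Your ledger simply drops this term. The omission is not harmless for the density argument: a small $A$-item has size in $\bigl(\frac{1+\eps}{7},\frac{1+\eps+\gam}{7}\bigr)$ and weight $1$, so its weight-to-size ratio is about $\frac{7}{1+\eps}$, which is strictly \emph{larger} than the $\frac{7}{1+28\eps}$ you use for $C_{j'}$-items. If you patch the decomposition to $w_C + w_A + B$ and apply the looser combined density bound $\frac{7(s_C+s_A)}{1+\eps}$, the $\gam$-dependent slack in several thresholds (e.g.\ $(4+4\eps+6\gam)/14$) is of the wrong sign and does not close without invoking how astronomically small $\gam$ is relative to $\eps$ (far smaller than the stated $\gam < \eps/4$), or without the nontrivial observation that the forced presence of at least one $C_j$-item contributes a compensating $\Theta(\eps/7^j)$ term. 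The paper avoids this subtlety entirely by the replacement device, which absorbs small $A$-items into $C_j$-items with the same total weight and then works with a single item species, so its density and integrality issues collapse into a single count. You would need to either adopt a similar reduction or add an explicit treatment of the $A$-item density and the extra integrality structure $7^{j-1}(w_C+w_A)\in\mathbb{Z}$ to make your route rigorous.
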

\begin{proof}
Consider a type $j$ bin $D_j$ (for $2 \leq j \leq t$). As an
$A$--item can be large only in the case that it is packed into an
empty bin, $S(D_j)$ may only have small $A$--items (and items
which are not $A$--items).

In the case $j=t$, we claim that every $C_k$--item for any $2 \leq
k \leq t-1$ can be replaced with $6\cdot 7^{t-k-1}$ $C_t$--items.
The total size of items is not increased (the size of a $C_k$ item
is above $\frac{1}{7^{k-1}}$ while the total size of $6\cdot
7^{t-k-1}$ $C_t$--items is below $\frac{6\cdot 7^{t-k-1}}{6\cdot
7^{t-1}}=\frac{1}{7^k}$, so for any possible continuation after
the $A$--items, the remaining members of $S(D_t)$ can still be
packed. Similarly, every (small) $A$--item is replaced with
$6\cdot 7^t$ $C_t$--items (whose total size is smaller). The
weight is unchanged since the weight of every $C_t$--item is
$\frac{1}{6\cdot 7^{t-2}}$, so the total weight of $6\cdot
7^{t-k-1}$ $C_t$--items is $\frac{1}{7^{k-1}}$, which is the
weight of a $C_k$--item, and this is valid for the case $k=1$ (of
$A$--items) as well. Thus, we assume that $S(D_t)$ has some number
of $C_t$--items, and calculate the number of $B_{11}$--items,
$B_{21}$--items, and $B_{31}$--items that can be packed given the
number of $C_t$--items (as $B_{22}>B_{21}$ and $B_{32}>B_{31}$,
and they all have weights of $1$, we consider only single items
and pairs of $B_{21}$--items and pairs of $B_{31}$--items).

We show what cannot be included in $S(D_t)$. The maximum number of
$C_t$--items is $6\cdot 7^{t-1}=42\cdot 7^{t-2}$. We will use the
following property: $(28 \cdot 7^{t-2}+1)\cdot 294\eps+\eps <
\frac{1}{6 \cdot 7^{t-1}}$. This property holds as $\eps <
\frac{1}{6^t\cdot 7^{3t}}$ and therefore $$(28\cdot
7^{t-2}+1)\cdot 294\eps+\eps \leq (24\cdot 7^{t+1}+295)\eps <
25\cdot 7^{t+1}\eps < \frac{25\cdot 7^{t+1}}{6^t\cdot
7^{3t}}<\frac{1}{6^{t-1}\cdot 7^{2t-2}} \ , $$ as $t \geq 3$.

\begin{itemize}

\item If there are at least $12\cdot 7^{t-2}+1$ $C_t$--items in
$S(D_t)$, there cannot be two $B_{31}$--items as $(12\cdot
7^{t-2}+1)(\frac{1}{6\cdot
7^{t-1}}-294\eps)+2(\frac{5-2\eps-3\gam}{14}) \geq
1+\frac{1}{6\cdot 7^{t-1}} - (12\cdot 7^{t-2}+1)294\eps-\eps >1$.

\item If there are at least $14\cdot 7^{t-2}+1$ $C_t$--items in
$S(D_t)$, there cannot be two $B_{21}$--items as $(14\cdot
7^{t-2}+1)(\frac{1}{6\cdot 7^{t-1}}-294\eps)+2(\frac{1+\eps}{3}) >
1+\frac{1}{6\cdot 7^{t-1}} -(14\cdot 7^{t-2}+1)294\eps >1$.

\item If there are at least $21\cdot 7^{t-2}+1$ $C_t$--items in
$S(D_t)$, there cannot be a $B_{11}$--item as $(21\cdot
7^{t-2}+1)(\frac{1}{6\cdot 7^{t-1}}-294\eps)+(\frac{1+2\eps}{2})
> 1+\frac{1}{6\cdot 7^{t-1}} -(21\cdot 7^{t-2}+1)294\eps >1$.

\item If there are at least $28\cdot 7^{t-2}+1$ $C_t$--items in
$S(D_t)$, there cannot be a $B_{21}$--item and there cannot be a
$B_{31}$--item as $(28 \cdot 7^{t-2}+1)(\frac{1}{6\cdot
7^{t-1}}-294\eps)+(\frac{1+\eps}{3})
> 1+\frac{1}{6\cdot 7^{t-1}} -(28\cdot 7^{t-2}+1)294\eps >1$ and the fact that a $B_{31}$ item is larger than a $B_{21}$--item.
\end{itemize}

Now, we can find upper bounds on the prices in all cases.

\begin{itemize}
\item  If the number of $C_t$--items is at most $12\cdot 7^{t-2}$,
the price is at most $5+\frac{12\cdot 7^{t-2}}{6\cdot 7^{t-2}}=7$.

\item If the number of $C_t$--items is at least $12\cdot
7^{t-2}+1$ and at most $14\cdot 7^{t-2}$,  the price is at most
$4+\frac{14\cdot 7^{t-2}}{6\cdot 7^{t-2}} < 7$.

\item If the number of $C_t$--items is at least $14\cdot
7^{t-2}+1$ and at most $21\cdot 7^{t-2}$,  the price is at most
$3+\frac{21\cdot 7^{t-2}}{6\cdot 7^{t-2}} < 7$.

\item If the number of $C_t$--items is at least $21\cdot
7^{t-2}+1$ and at most $28\cdot 7^{t-2}$,  the price is at most
$2+\frac{28\cdot 7^{t-2}}{6\cdot 7^{t-2}} < 7$.

\item If the number of $C_t$--items is at least $28\cdot
7^{t-2}+1$, the price is at most $\frac{42 \cdot 7^{t-2}}{6\cdot
7^{t-2}} = 7$.
\end{itemize}

Next, we consider $D_k$ for $k<t$, and show that the price is
slightly smaller. No bin can contain more than ${7^k-1}$
$C_k$--items (as their sizes are above $\frac 1{7^k}$).  Here, we
can replace every item of size $C_j$ (for $2\leq j<k$) by exactly
$7^{k-j}$ $C_k$--items without modifying the total weight and
similarly we can replace every small $A$ item by $7^{k-1}$ $C_k$
items without changing the total weight.  Thus, we will assume
that $D_k$ does not contain such items. We will use the properties
that the numbers $7^k-1$ and $7^k+2$ are divisible by $3$, and the
numbers $7^k-1$ and $7^k+1$ are divisible by $2$.

\begin{itemize}

\item If there are at least $2\cdot 7^{k-1}$ $C_k$--items in
$S(D_k)$, there cannot be two $B_{31}$--items as $(2\cdot
7^{k-1})(\frac{1+28\eps}{ 7^{k}})+2(\frac{5-2\eps-3\gam}{14}) >
1+7\eps$.

\item  If there are at least $(7^{k}+2)/3$ $C_k$--items in
$S(D_k)$, there cannot be two $B_{21}$--items as
$((7^{k}+2)/3)(\frac{1+28\eps}{7^{k}})+2(\frac{1+\eps}{3}) > 1$.

\item If there are at least $(7^{k}+1)/2$ $C_k$--items in
$S(D_k)$, there cannot be a $B_{11}$--item as
$((7^{k}+1)/2)(\frac{1+28\eps}{7^{k}})+(\frac{1+2\eps}{2}) > 1$.

\item If there are at least $(9\cdot 7^{k-1}+1)/2$ $C_k$--items in
$S(D_k)$, there cannot be a $B_{31}$--item as $((9\cdot
7^{k-1}+1)/2)(\frac{1+28\eps}{7^{k}})+(\frac{5-2\eps-3\gam}{14})
\geq 1+\frac 1{2\cdot 7^k}-\eps>1$.

\item If there are at least $(2\cdot 7^{k}+1)/3$ $C_k$--items in
$S(D_k)$, there cannot be a $B_{21}$--item as $((2\cdot
7^{k}+1)/3)(\frac{1+28\eps}{ 7^{k}})+(\frac{1+\eps}{3})
> 1$.
\end{itemize}

Now, we can find upper bounds on the price in all cases.

\begin{itemize}
\item If the number of $C_k$--items is at most $2\cdot 7^{k-1}-1$,
the price is at most $5+\frac{2\cdot
7^{k-1}-1}{7^{k-1}}=7-\frac{1}{7^{k-1}}=\frac{7^k-1}{7^{k-1}}$.

\item If the number of $C_k$--items is at least $2\cdot 7^{k-1}$
and at most $(7^{k}-1)/3$,  the price is at most
$4+\frac{(7^{k}-1)/3}{7^{k-1}} =\frac{19\cdot 7^{k-1}-1}{3\cdot
7^{k-1}} < \frac{21\cdot 7^{k-1}-3}{3\cdot
7^{k-1}}=\frac{7^k-1}{7^{k-1}}$, as $k \geq 2$.

\item If the number of $C_k$--items is at least $(7^{k}+2)/3$ and
at most $(7^{k}-1)/2$,  the price is at most
$3+\frac{(7^{k}-1)/2}{ 7^{k-1}} =\frac{13\cdot 7^{k-1}-1}{2\cdot
7^{k-1}} < \frac{14\cdot 7^{k-1}-2}{2 \cdot
7^{k-1}}=\frac{7^k-1}{7^{k-1}}$, as $k \geq 2$.

\item  If the number of $C_k$--items is at least $(7^{k}+1)/2$ and
at most $(9\cdot 7^{k-1}-1)/2$,  the price is at most
$2+\frac{(9\cdot 7^{k-1}-1)/2}{7^{k-1}} =\frac{13\cdot
7^{k-1}-1}{2\cdot 7^{k-1}} < \frac{7^k-1}{7^{k-1}}$.

\item If the number of $C_k$--items is at least $(9\cdot
7^{k-1}+1)/2$ and at most  $(2\cdot 7^{k}-2)/3$ ,  the price is at
most $1+\frac{(2\cdot 7^{k}-2)/3}{7^{k-1}}  =\frac{17\cdot
7^{k-1}-2}{3\cdot 7^{k-1}} < \frac{21\cdot 7^{k-1}-3}{3\cdot
7^{k-1}}=\frac{7^k-1}{7^{k-1}}$, as $k \geq 2$.

\item  If the number of $C_k$--items is at least $(2\cdot
7^{k}+1)/3$, the price is at most $\frac{7^{k}-1}{7^{k-1}}$.
\end{itemize}

This concludes the proof.
\end{proof}

\paragraph{Using the prices of bin types to establish the lower bound on the asymptotic competitive ratio.}

Let $\nu_j$ denote the number of bins opened for $C_j$--items
(bins used for the first time when the batch of $C_j$--items is
presented). Let $\nu_1$ denote the number of bins opened for
$A$--items. Let $\nu_{k\ell}$ denote the number of bins opened for
$B_{k\ell}$--items, for $(k,\ell)\in ISB$, where $ISB
=\{(1,1),(2,1),(2,2),(3,1),(3,2)\}$.  Moreover, as large
$A$--items are exactly those $A$--items that are packed as first
items of their bins, we have $\nu_1=n_L$.

Let $ALG_j$ denote the cost of the algorithm for the input up to
the batch of $C_j$--items, and let $ALG_1$ denote the cost of the
algorithm up to the batch of $A$--items. Let $ALG_{k\ell}$ denote
the cost of the algorithm up to the batch of $B_{k\ell}$--items
for $(k,\ell)\in ISB$.

Let $R$ be the asymptotic competitive ratio of ALG, and let $f$ be
a function such that $f(n)=o(n)$ and for any input $I$ it holds
that $ALG(I) \leq R \cdot OPT(I)+f(OPT(I))$.

We have $ALG_j \leq R \cdot OPT_j+f(OPT_j)$ for $1 \leq j \leq t$.
We also have $ALG_{k\ell} \leq R \cdot OPT_{k\ell}+f(OPT_{k\ell})$
for $(k,\ell)\in ISB$.

Let $W$ denote the total weight of all items (for all branches,
such that every possible item is counted exactly once). Since
$\frac{1}{6\cdot 7^{t-2}}+\sum_{j=2}^{t-1} \frac 1{7^{j-1}} =
\frac 16$, we have
\begin{eqnarray*}W&=&N\cdot
(\frac{1}{6\cdot 7^{t-2}}+\sum_{j=2}^{t-1} \frac 1{7^{j-1}})+
w\cdot n_L+ (N-n_L)+\frac N3  +2N +n_{31}+n_{32}\\ &=&\frac N6 +
(w-1)n_L+\frac{10N}3 +\frac{7N - 7n_L}6 + \frac{7N -5n_L}6 \\ &=& w \cdot
n_L-3 \cdot n_L + \frac{35N}6 \ .
\end{eqnarray*}

\begin{lemma}\label{fiv}
We have $W \leq \sum_{j=1}^{t}  W_j\nu_j +
W_d(\nu_{21}+\nu_{31})+W_s(\nu_{11}+\nu_{22}+\nu_{32})=\sum_{j=1}^{t}
W_j\nu_j+\nu_{11}+2\nu_{21}+\nu_{22}+2\nu_{31}+\nu_{32}$.
\end{lemma}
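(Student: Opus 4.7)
The plan is to view $W$ as summing each possible item exactly once, partition these items among the bins opened by the algorithm via the virtual sets $S(D)$, and then bound each $S(D)$ by the price of its bin type.

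First I would observe that the number of bins opened by the algorithm decomposes according to the batch in which the bin first received an item. Exactly $\nu_j$ bins are opened by a $C_j$--item (for $2\le j\le t$), exactly $\nu_1$ bins are opened by an $A$--item, and $\nu_{k\ell}$ bins are opened by a $B_{k\ell}$--item. The bin type, as defined earlier, is determined precisely by this first item: bins opened at batch $C_j$ have type $j$; bins opened at the $A$--batch have type $1$; bins opened at batch $B_{21}$ or $B_{31}$ are double bins (their first item has size in $(\tfrac13,\tfrac12]$); and bins opened at batch $B_{11}$, $B_{22}$, or $B_{32}$ are single bins. So the total number of bins opened by the algorithm, grouped by type, is $\sum_{j=1}^t \nu_j+(\nu_{21}+\nu_{31})+(\nu_{11}+\nu_{22}+\nu_{32})$.

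Next I would argue that every item that can possibly appear in any branch is packed into exactly one algorithmic bin (on the branch where it appears) and therefore belongs to exactly one of the sets $S(D)$. By the very definition of $S(D)$ as the union of the contents of $D$ over all continuations, the family $\{S(D)\}_D$ partitions the collection of all items counted in $W$. Hence
\begin{equation*}
W \;=\; \sum_D \,\mathrm{weight}(S(D)).
\end{equation*}
By the definition of the price of a bin type as the supremum of $\mathrm{weight}(S(D))$ over all bins $D$ of that type, each term is at most the price of $D$'s type. Grouping bins by type gives
\begin{equation*}
W \;\le\; \sum_{j=1}^{t} W_j\,\nu_j \;+\; W_d(\nu_{21}+\nu_{31}) \;+\; W_s(\nu_{11}+\nu_{22}+\nu_{32}),
\end{equation*}
and substituting $W_d=2$ and $W_s=1$ from the earlier lemma yields the stated equality on the right.

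I do not expect a real obstacle here; the only thing to be careful about is the bookkeeping that every item of every branch is accounted for in some $S(D)$ and that no bin is double-counted across types. The type of a bin is pinned down by the first item that opens it, so the partition of opened bins into types is a true partition, which makes the inequality a straightforward consequence of the definitions and the per-type bounds already established.
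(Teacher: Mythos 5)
Your proof is correct and follows essentially the same approach as the paper's: partition the items over all branches into the sets $S(D)$, bound each $S(D)$ by the price of its bin type, and group bins by type. You simply spell out the bookkeeping (that bin type is determined by the opening item, so the grouping is a true partition) more explicitly than the paper's terse proof does.
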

\begin{proof}
The weight of every item is included in the price of exactly one bin used by the
algorithm. Thus, the total weight is equal to the total price of
bins. Given the supremum prices, we get an upper bound on the
total price.  This proves the inequality, the equality holds by substituting the values of $W_d$ and $W_s$.
\end{proof}

Let $n'_L=\frac{n_L}{N}$, and $W'=\frac{W}N=w \cdot n'_L-3 \cdot
n'_L + \frac{35}6$.

\begin{lemma}\label{finallem}
For any value of $n_L$ ($0 \leq n_L \leq N$) and for any value of
$w$ ($1 \leq w \leq 1.5$), we have $R \geq
\frac{W'}{2133/588-1.25n'_L+\frac{1}{7\cdot 48\cdot49^{t-2}}+\frac{1}{48\cdot
49}+w/7}$, and therefore $R \geq \frac{w\cdot n'_L-3 \cdot n'_L +
\frac{35}6}{8533/2352-1.25n'_L+w/7}$.
\end{lemma}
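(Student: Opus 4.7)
The strategy is to convert the weight inequality of Lemma~\ref{fiv} into a bound on $R$ by telescoping the counts $\nu$ into cumulative algorithm costs $ALG_i$, then applying $ALG_i \leq R \cdot OPT_i + f(OPT_i)$ together with the OPT upper bounds from the previous section.

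First I would use $\nu_j = ALG_j - ALG_{j+1}$ (setting $ALG_{t+1}:=0$) for $1 \leq j \leq t$, together with $\nu_{11} = ALG_{11} - ALG_1$, $\nu_{21} = ALG_{21} - ALG_1$, $\nu_{22} = ALG_{22} - ALG_{21}$, $\nu_{31} = ALG_{31} - ALG_1$, $\nu_{32} = ALG_{32} - ALG_{31}$. Abel summation turns $\sum_j W_j \nu_j$ into $W_1\,ALG_1 + \sum_{j=2}^t (W_j - W_{j-1})\,ALG_j$, where the explicit $W_j$'s give $W_2 - W_1 = 13/7 - w$ (non-negative because $w \leq 3/2 < 13/7$), $W_j - W_{j-1} = 6/7^{j-1}$ for $3 \leq j \leq t-1$, and $W_t - W_{t-1} = 1/7^{t-2}$. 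The $B$-piece collapses to $ALG_{11}+ALG_{21}+ALG_{22}+ALG_{31}+ALG_{32} - 5\,ALG_1$, so the $ALG_1$-coefficient in the combined right-hand side of Lemma~\ref{fiv} becomes $(w+5) - 5 = w$.

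Next I would substitute $ALG_i \leq R\cdot OPT_i + f(OPT_i)$ together with the bounds $OPT_j \leq N/(6\cdot 7^{j-1})$, $OPT_{11} \leq N/3$, $OPT_{21} \leq N/2$, $OPT_{22} \leq N$, $OPT_{31} \leq (7N-5n_L)/12$, $OPT_{32} \leq (7N-5n_L)/6$, divide by $N$, and let $N\to\infty$ so the $f$-terms vanish. Collecting coefficients in the resulting denominator: the $ALG_1, ALG_2$ contributions combine as $w/6 + (13/7-w)/42 = w/7 + 13/294$; the geometric sum $\sum_{j=3}^{t-1} 1/49^{j-1}$ together with the $ALG_t$-tail $1/(42\cdot 49^{t-2})$ simplifies, via $1/42 - 1/48 = 1/(7\cdot 48)$, to $1/(48\cdot 49) + 1/(7\cdot 48\cdot 49^{t-2})$; the constants $1/3 + 1/2 + 1 + 7/12 + 7/6 = 43/12$ added to $13/294$ yield $2133/588$; and the $n'_L$-coefficient $-5/12-5/6 = -5/4$ supplies the $-1.25\,n'_L$ term. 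This establishes the first claimed inequality.

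For the concluding inequality, the first bound holds for every $t\geq 3$ while $1/(7\cdot 48\cdot 49^{t-2}) \to 0$ as $t\to\infty$; since $R$ is intrinsic to the algorithm, taking $t$ arbitrarily large and using $8533/2352 = 2133/588 + 1/(48\cdot 49)$ yields the cleaner form. The main obstacle is the coefficient bookkeeping, in particular verifying that the $ALG_t$-tail combines with the finite geometric sum to produce exactly $1/(7\cdot 48\cdot 49^{t-2})$ and that $43/12 + 13/294$ equals precisely $2133/588$; the only conceptual check is the non-negativity of every telescoping coefficient $W_j - W_{j-1}$, which for the exceptional case $j=2$ follows from $w \leq 3/2$.
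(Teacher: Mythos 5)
Your proposal is correct and matches the paper's argument essentially exactly: the paper likewise forms the linear combination of the constraints $ALG_i \le R\cdot OPT_i + f(OPT_i)$ with multipliers $W_j - W_{j-1}$ (for $j\ge 2$), $w = W_1 - W_s - 2W_d$ (for $j=1$), and $W_s$ or $W_d - W_s$ for the $B$-branches, lower-bounds the left-hand side by $W$ via Lemma~\ref{fiv}, and then collects coefficients in the right-hand side to arrive at the same denominator; your Abel-summation phrasing is just the same computation written from the other direction, and your coefficient bookkeeping ($w/7 + 13/294$, the telescoped geometric sum giving $\tfrac{1}{48\cdot 49} + \tfrac{1}{7\cdot 48\cdot 49^{t-2}}$, $43/12 + 13/294 = 2133/588$, and the $-5/4$ coefficient on $n'_L$) all check out.
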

\begin{proof}
As the optimal costs are not smaller than $M$, and $M$ can be
chosen to be sufficiently large, we will neglect the additive term
of $f(OPT(I))$, and assume that for every input $I$ for which $OPT(I) \geq M$, we have
$ALG(I) \leq R \cdot OPT(I)$.

We will write the constraints for all possible inputs (with all
stopping points and continuations), and we will take a linear
combination of them using positive multipliers. For an input $I$,
we will exhibit a formula for $ALG(I)$ and an upper bound for
$OPT(I)$.  The inequality for this input is that the formula for
$ALG(I)$ is at most $R$ times the upper bound for $OPT(I)$.  This
inequality is the one we multiply by the corresponding multiplier.

For $1 \leq j \leq t$ we have $ALG_j=\sum_{i=j}^t \nu_i$ and
$OPT_j \leq \frac{N}{6 \cdot 7^{j-1}}$. The multiplier for $j\geq
2$ is $W_j-W_{j-1}$, and the multiplier for $j=1$ is
$W_1-W_{s}-2\cdot W_{d}=5+w-5=w$. For $j=t$, we have
$W_t-W_{t-1}=7-(7-\frac{1}{7^{t-2}})=\frac{1}{7^{t-2}}$. For $3
\leq j \leq t-1$, we have
$W_j-W_{j-1}=(7-\frac{1}{7^{j-1}})-(7-\frac{1}{7^{j-2}})=\frac{1}{7^{j-2}}-\frac{1}{7^{j-1}}=\frac{6}{7^{j-1}}$.
For $j=2$, we have
$W_2-W_{1}=(7-\frac{1}{7})-(5+w)=\frac{13}7-w>0$, as $w \leq 1.5$.
Let $\Delta=\sum_{j=1}^t \nu_j$.

For the input that ends with $B_{11}$--items,
$ALG_{11}=\Delta+\nu_{11}$ while $OPT_{11} \leq \frac{N}3$. The
multiplier is $W_s=1$.

For the input that ends with $B_{21}$--items,
$ALG_{21}=\Delta+\nu_{21}$ while $OPT_{21} \leq \frac{N}2$. The
multiplier is $W_d-W_s=1$.

For the input that ends with $B_{22}$--items,
$ALG_{22}=\Delta+\nu_{21}+\nu_{22}$ while $OPT_{22} \leq N$. The
multiplier is $W_s=1$.

For the input that ends with $B_{31}$--items,
$ALG_{31}=\Delta+\nu_{31}$.  In this case $OPT_{31} \leq
\frac{7N-5n_L}{12}$. The multiplier is $W_d-W_s=1$.

For the input that ends with $B_{32}$--items,
$ALG_{32}=\Delta+\nu_{31}+\nu_{32}$.  In this case $OPT_{32} \leq
\frac{7N  -5n_L}6$. The multiplier is $W_s=1$.

Taking the sum of these inequalities (multiplied by the chosen
multipliers) we have a left hand side of
\begin{eqnarray*}
\displaystyle
&&w\Delta+5\Delta+\nu_{11}+2\nu_{21}+\nu_{22}+2\nu_{31}+\nu_{32}+\sum_{j=2}^t
(W_j-W_{j-1})(\sum_{i=j}^t \nu_i) \\&=&
\nu_{11}+2\nu_{21}+\nu_{22}+2\nu_{31}+\nu_{32}+(5+w)\Delta+\sum_{j=2}^t
(W_j-W_1)\nu_j  \\ & =&
\nu_{11}+2\nu_{21}+\nu_{22}+2\nu_{31}+\nu_{32}+(5+w)\Delta+\sum_{j=2}^t
W_j\nu_j-W_1\sum_{j=1}^t \nu_j+W_1\nu_1
\\&=& \nu_{11}+2\nu_{21}+\nu_{22}+2\nu_{31}+\nu_{32}+(5+w)\Delta+\sum_{j=1}^t
W_j\nu_j-W_1\Delta\\&=& \sum_{j=1}^{t}
W_j\nu_j+\nu_{11}+2\nu_{21}+\nu_{22}+2\nu_{31}+\nu_{32} \geq W \ ,
\end{eqnarray*} where the last inequality holds
by Lemma \ref{fiv}.

The right hand side is $R$ multiplied by
$$\frac{11N}6+\frac{7N-5n_L}{12}+\frac{7N
-5n_L}6+\frac{1}{7^{t-2}}\cdot \frac{N}{6\cdot
7^{t-1}}+\sum_{j=3}^{t-1} \frac{6}{7^{j-1}}\cdot\frac{N}{6\cdot
7^{j-1}}+(\frac{13}7-w)\frac{N}{42}+w\frac
N6$$$$=\frac{2133N}{588}-\frac{5n_L}4+\frac{N}{7\cdot 48\cdot
49^{t-2}}+\frac{N}{48\cdot 49}+\frac{w\cdot N}{7} \ . $$ Thus, by
the resulting inequality we deduce the first lower bound on $R$.
The second inequality (in the statement of the lemma) holds as the
first one holds for all integers $t\geq 3$ and by letting $t$ to
grow unbounded, we establish the second lower bound on $R$ from
the first one.
\end{proof}

\begin{theorem}
We have $R \geq \frac{1363-\sqrt{1387369}}{120} \approx
1.5427809064729$.  That is, there is no online algorithm for bin packing with asymptotic competitive ratio strictly smaller than $ \frac{1363-\sqrt{1387369}}{120} \approx
1.5427809064729$.
\end{theorem}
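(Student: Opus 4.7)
The plan is to apply Lemma~\ref{finallem} after optimizing over the free weight parameter $w$ assigned to large $A$-items. For each fixed $w \in [1, 1.5]$, the right-hand side of the bound,
$$f(w, n'_L) \;=\; \frac{(w-3)\, n'_L + 35/6}{8533/2352 - 1.25\, n'_L + w/7},$$
is a ratio of two affine functions of $n'_L$ with strictly positive denominator on $[0,1]$, hence a M\"obius function which is either strictly monotone or constant on the interval. Its minimum over $n'_L \in [0,1]$ is therefore attained at an endpoint, so the best bound Lemma~\ref{finallem} yields is $\sup_{w \in [1, 1.5]} \min\{f(w, 0), f(w, 1)\}$. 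A short derivative calculation confirms that $f(w, 0)$ is strictly decreasing in $w$ while $f(w, 1)$ is strictly increasing in $w$, so this supremum is attained at the unique $w^*$ for which $f(w^*, 0) = f(w^*, 1)$; at that $w^*$ the M\"obius function $f(w^*, \cdot)$ takes equal values at two distinct points and is therefore identically equal to the common value on $[0,1]$, making the resulting lower bound on $R$ independent of $n'_L$.

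I would pin down $w^*$ by setting
$$\frac{35/6}{8533/2352 + w/7} \;=\; \frac{w + 17/6}{5593/2352 + w/7}$$
(using $8533/2352 - 5/4 = 5593/2352$ and $-3 + 35/6 = 17/6$). Cross-multiplying and clearing the denominators $6$, $7$, $2352$ collapses, after grouping terms, to the quadratic
$$48 w^2 + 1075 w - 1207 \;=\; 0,$$
whose positive root is $w^* = (-1075 + \sqrt{1387369})/96 \approx 1.072$, safely inside $[1,1.5]$ and hence admissible in Lemma~\ref{finallem}.

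It then remains to evaluate the common value $f(w^*, 0)$ in closed form. Bringing $8533/2352$ and $w^*/7$ over the common denominator $4704$ yields $f(w^*, 0) = 27440 / \bigl(9541 + 7 \sqrt{1387369}\bigr)$. Rationalizing by the conjugate and using the arithmetic identities $9541^2 - 49 \cdot 1387369 = 23049600 = 840 \cdot 27440$ and $9541 = 7 \cdot 1363$ collapses the expression to
$$f(w^*, 0) \;=\; \frac{9541 - 7\sqrt{1387369}}{840} \;=\; \frac{1363 - \sqrt{1387369}}{120},$$
which is the claimed lower bound. The only real obstacle is arithmetic bookkeeping: propagating the rational constants cleanly through cross-multiplication, and spotting the factorizations $9541 = 7 \cdot 1363$ and $23049600 = 840 \cdot 27440$ that make the radical simplify to the stated form. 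Once $w^*$ is identified, via the M\"obius observation, as the root of the quadratic where $f(w, \cdot)$ becomes constant in $n'_L$, the rest is direct computation.
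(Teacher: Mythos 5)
Your proposal is correct and arrives at exactly the same choice of $w$ and the same final bound as the paper; the only difference is expository. The paper's proof simply declares $w=\frac{\sqrt{1387369}-1075}{96}$ and $r=\frac{1363-\sqrt{1387369}}{120}$, observes the algebraic relation $w=3-1.25r$, and verifies that plugging these into Lemma~\ref{finallem} makes the ratio identically $r$ (the coefficient of $n'_L$ vanishes and the remaining constant identity holds by the defining quadratic for $r$). You instead derive $w^*$ from first principles: use the M\"obius structure to reduce the $n'_L$-minimization to the two endpoints, use the opposing monotonicities of $f(\cdot,0)$ and $f(\cdot,1)$ to locate the optimum where they agree, solve the resulting quadratic $48w^2+1075w-1207=0$, and then simplify $f(w^*,0)$ by rationalization. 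This is the computation one would perform to discover the constants the paper asserts; both routes rest on making the bound of Lemma~\ref{finallem} independent of $n'_L$, and your arithmetic (the quadratic, the positive root $\approx 1.072\in[1,1.5]$, and the simplification via $9541=7\cdot 1363$ and $9541^2-49\cdot 1387369=840\cdot 27440$) all check out.
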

\begin{proof}
Let $r=\frac{1363-\sqrt{1387369}}{120}\approx 1.5427809064729$ and
let $w=\frac{\sqrt{1387369}-1075}{96}\approx  1.07152386690879$,
where $w=3-1.25\cdot r$.

We have $R \geq \frac{w\cdot n'_L-3 \cdot n'_L +
\frac{35}6}{8533/2352-1.25n'_L+w/7}$, and we show that this
expression is equal to $r$ (for any $n'_L$, where $0 \leq n'_L
\leq 1$). The denominator is positive as $8533/2352-1.25n'_L+w/7 >
8533/2352-1.25+1/7 >2$, by $n'_L\leq 1$ and $w \geq  1$. Thus, it
is equivalent to showing $w\cdot n'_L-3 \cdot n'_L + \frac{35}6 =
r(8533/2352-1.25n'_L+w/7)$, which is equivalent to
$n'_L(w-3+1.25\cdot r)+ \frac{35}6 - r(8533/2352+w/7)=0$.

Indeed $w-3+1.25\cdot r =0$, by the choice of $w$ and $r$.
Additionally, $\frac{35}6 - r(8533/2352+w/7)=\frac{35}6 -
r(8533/2352+(3-1.25\cdot r)/7)=0$, by the choice of $r$.
\end{proof}

\begin{remark}
We note that our choice of $w$ and $r$ are optimal in the sense
that the lower bound of Lemma \ref{finallem} cannot be used to
prove a higher lower bound on $R$ using other values of $w$ for
the formula which we obtained. This can be observed by solving the
corresponding mathematical program of maximizing (over the
possible values of $w$) of minimizing (over the possible values of
$n'_L$) of the ratio function defined using these two parameters
that we establish in Lemma \ref{finallem}.
\end{remark}

\bibliographystyle{abbrv}
\bibliography{bplb}

\end{document}